\newcommand{\w}{{\mathrm w}}
\begin{document}

\Logo{}

\begin{frontmatter}

\title{On the Griesmer Bound for Systematic Codes}

{\author{Emanuele Bellini}}
{{\tt (eemanuele.bellini@gmail.com)}}\\
{{Department of Mathematics, University of Trento, Italy.}}


\runauthor{E.~Bellini}

\begin{abstract}
We generalize the Griesmer bound in the case of systematic codes over a field of size $q$ greater than the distance $d$ of the code.
We also generalize the Griesmer bound in the case of any systematic code of distance 2,3,4 and in the case of binary systematic codes of distance up to 6.
\end{abstract}

\begin{keyword}
 systematic code, nonlinear code, Griesmer bound
\end{keyword}
\end{frontmatter}

\section{Introduction}

In this article we want to prove that the Griesmer Bound applies also to some systematic codes.

\section{The Griesmer Bound}
From now on let $q$ be the power of a prime number, and $n,k,d$ three integers such that a $[q,n,k,d]$ systematic code exists.\\
Let us recall the original bound given by Griesmer \cite{gries}.
\begin{theorem}[Griesmer bound]\label{griesmer}
Let $n$ be such that there exists an $[q,n,k]$ linear code with distance at least $d$. Then
$$
 n \geq \sum_{j=0}^{k-1} \left\lceil \frac{d}{q^j} \right\rceil.
$$   
\end{theorem}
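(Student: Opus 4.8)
The statement is the classical Griesmer bound, and the plan is to reprove it via the \emph{residual code} construction together with induction on the dimension $k$. Let $C$ be a linear code satisfying the hypotheses and let $d_0 \ge d$ be its true minimum distance; since $\lceil d_0/q^j\rceil \ge \lceil d/q^j\rceil$ for every $j$, proving the bound with $d_0$ in place of $d$ is enough, so, after renaming, I may assume $d$ is exactly the minimum distance of $C$ and fix a codeword $c$ of weight $d$. Permuting coordinates, I assume $\mathrm{supp}(c)=\{1,\dots,d\}$ and define the residual code $C'$ to be the image of $C$ under projection onto the last $n-d$ coordinates.

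First I would identify the parameters of $C'$, the claim being that it is a linear code of length $n-d$, dimension $k-1$, and minimum distance at least $\lceil d/q\rceil$. The dimension count: the kernel of the projection $C\to C'$ is the set of codewords supported inside $\{1,\dots,d\}$; a nonzero such word has weight at most $d$, hence weight exactly $d$ and support exactly $\{1,\dots,d\}$, and subtracting a suitable scalar multiple of $c$ annihilates one of its coordinates while keeping its support inside $\{1,\dots,d\}$, so by minimality it must equal that multiple of $c$. Thus the kernel is the line $\langle c\rangle$ and $\dim C'=k-1$.

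The distance estimate is the heart of the matter and the only step I expect to require genuine care. Pick a nonzero $\bar x\in C'$ and lift it to $x\in C$ with $x\notin\langle c\rangle$. For every scalar $\lambda$ the word $x-\lambda c$ is nonzero, so $\mathrm{wt}(x-\lambda c)\ge d$; since $c$ vanishes outside $\{1,\dots,d\}$, the weight of $x-\lambda c$ on the last $n-d$ positions equals $\mathrm{wt}(\bar x)$, whence $\mathrm{wt}(\bar x)\ge d-\#\{\,i\le d: x_i=\lambda c_i\,\}$. As each $c_i$ with $i\le d$ is nonzero, the equality $x_i=\lambda c_i$ is equivalent to $\lambda=x_i/c_i$; these $d$ ratios lie in $\mathbb F_q$, so some value $\lambda_0$ is attained at least $\lceil d/q\rceil$ times. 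Taking $\lambda=\lambda_0$ gives $\mathrm{wt}(\bar x)\ge d-\big(d-\lceil d/q\rceil\big)=\lceil d/q\rceil$.

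It remains to run the induction on $k$. The base case $k=1$ is immediate: a one-dimensional code is generated by a single word of weight $\ge d$, so $n\ge d=\lceil d/q^0\rceil$. For $k\ge 2$, apply the inductive hypothesis to $C'$ (length $n-d$, dimension $k-1$, distance at least $\lceil d/q\rceil$) to get $n-d\ge\sum_{j=0}^{k-2}\big\lceil \lceil d/q\rceil/q^{j}\big\rceil$; the elementary identity $\lceil\lceil a/b\rceil/c\rceil=\lceil a/(bc)\rceil$ for positive integers turns the right-hand side into $\sum_{j=0}^{k-2}\lceil d/q^{j+1}\rceil=\sum_{j=1}^{k-1}\lceil d/q^{j}\rceil$, and adding $d$ gives $n\ge\sum_{j=0}^{k-1}\lceil d/q^{j}\rceil$. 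The only bookkeeping subtlety, the gap between ``distance at least $d$'' and ``distance exactly $d$'', is absorbed by the reduction made at the outset.
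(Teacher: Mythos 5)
Your proof is correct in substance: it is the standard residual-code argument (pass to the projection of $C$ away from the support of a minimum-weight codeword, show the residual code has parameters $[n-d,\,k-1,\,\ge\lceil d/q\rceil]$, and induct on $k$ using $\lceil\lceil a/b\rceil/c\rceil=\lceil a/(bc)\rceil$). There is nothing to compare it against here, because the paper does not prove Theorem~\ref{griesmer} at all: it simply recalls the classical result with a citation to Griesmer's 1960 article, and the paper's own contribution begins only with the subsequent statements about systematic codes. So your write-up supplies a proof where the source offers none, and every step of it --- the identification of the projection kernel with $\langle c\rangle$, the pigeonhole over the $q$ values of $\lambda$ applied to the $d$ ratios $x_i/c_i$, and the reduction from ``distance at least $d$'' to ``distance exactly $d$'' --- is sound. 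One small slip to fix before this could stand on its own: the displayed intermediate inequality should read $\mathrm{wt}(\bar x)\ge d-\#\{\,i\le d: x_i\ne\lambda c_i\,\}$, equivalently $\mathrm{wt}(\bar x)\ge \#\{\,i\le d: x_i=\lambda c_i\,\}$; as written you have the complementary set, although the final computation $\mathrm{wt}(\bar x)\ge d-(d-\lceil d/q\rceil)=\lceil d/q\rceil$ shows you used the correct version of the estimate.
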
 

\begin{lemma}\label{lemk1}
  If $k=1$, then for each $q,n,d$ such that a systematic code $C$ exists, then
  $$
  n \geq \sum_{j=0}^{k-1} \left\lceil \frac{d}{q^j} \right\rceil.
  $$  
\end{lemma}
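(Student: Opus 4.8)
The plan is to notice that for $k=1$ the right-hand side collapses to a single term: $\sum_{j=0}^{k-1}\left\lceil d/q^{j}\right\rceil = \left\lceil d/q^{0}\right\rceil = \lceil d\rceil = d$. So the lemma is equivalent to the assertion that every systematic $[q,n,1,d]$ code $C$ satisfies $n\geq d$, and this is what I would prove.

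First I would recall the definition of a systematic code: there is a set of $k$ coordinate positions (the information positions) on which the projection $C\to\mathbb{F}_q^{k}$ is a bijection. In particular $|C| = q^{k}$, so for $k=1$ and $q\geq 2$ the code contains at least two distinct codewords $c$ and $c'$.

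Next I would invoke the elementary bound on Hamming distance: two vectors of $\mathbb{F}_q^{n}$ can differ in at most all $n$ coordinates, so $d_H(c,c')\leq n$ for every pair of codewords. Since by definition $d$ is the minimum of $d_H$ over distinct pairs of codewords — and at least one such pair exists — we conclude $d\leq d_H(c,c')\leq n$, i.e. $n\geq d = \sum_{j=0}^{k-1}\left\lceil d/q^{j}\right\rceil$, which is the claim.

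There is essentially no obstacle here: this is the trivial base case, the only content being the observation that the minimum distance of a code of length $n$ cannot exceed $n$. The interest of the statement is purely as the anchor for the more substantial arguments (larger $k$, small $d$, and the binary cases) that follow in the rest of the paper.
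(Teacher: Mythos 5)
Your proof is correct and follows the same route as the paper: reduce the sum to $d$ for $k=1$ and observe $n \ge d$. The paper simply states ``clearly $n \ge d$'' where you spell out the (valid) justification that two distinct codewords exist and can differ in at most $n$ positions.
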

\begin{proof}
  For $k = 1$ we have $\sum_{j=0}^{k-1} \left\lceil \frac{d}{q^j} \right\rceil = d$, and clearly $n \ge d$.
\end{proof}

In the following sections let $C$ be a $[q,n,k,d]$ systematic code, with $k \ge 2$, such that $0 \in C$, and let us indicate a word of $C$ as $c = (\bar{c},\tilde{c})$, where $\bar{c}$ is the systematic part of $c$ and $\tilde{c}$ is the nonsystematic part of $c$.

\section{The case $q \ge d$}
\begin{theorem}\label{thmqd}
  If $q \ge d$, for all $k \ge 2$, there exists no $q$\_ary systematic code such that $n < \sum_{i=0}^{k-1}\lceil\frac{d}{q^i}\rceil$.
\end{theorem}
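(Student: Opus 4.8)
The plan is to observe that the hypothesis $q \ge d$ collapses the Griesmer sum to the Singleton bound, and then to establish that bound directly from the systematic structure. Indeed, for every $i \ge 1$ we have $q^{i} \ge q \ge d \ge 1$, so $0 < d/q^{i} \le 1$ and hence $\lceil d/q^{i} \rceil = 1$; since also $\lceil d/q^{0} \rceil = d$, we obtain
$$
\sum_{i=0}^{k-1}\left\lceil \frac{d}{q^{i}} \right\rceil = d + (k-1).
$$
Thus it suffices to prove that every $q$-ary systematic code with $k \ge 2$ satisfies $n \ge d + k - 1$.

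To get this inequality I would exploit the systematic structure directly. Since $0 \in C$ and $q \ge 2$, the weight-one vector $e_{1} \in \mathbb{F}_{q}^{k}$ is the systematic part of a unique codeword $c = (\bar c, \tilde c) \in C$, and $c \neq 0$, so $\w(c) \ge d$. On the other hand $\w(c) = \w(\bar c) + \w(\tilde c) = 1 + \w(\tilde c) \le 1 + (n-k)$, because the nonsystematic part $\tilde c$ has length $n - k$. Combining the two bounds yields $d \le n - k + 1$, that is $n \ge d + k - 1$, which is exactly what is needed. (Equivalently, one could run the classical Singleton argument: the restriction of $C$ to any fixed set of $n - d + 1$ coordinates is injective, since two codewords agreeing on all of them would differ in at most $d - 1$ positions; as $|C| = q^{k}$ this forces $q^{k} \le q^{n-d+1}$, hence $n \ge d + k - 1$.)

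I do not expect a genuine obstacle here. The only points requiring care are the reduction step — checking that $q \ge d$ forces every term with $i \ge 1$ to equal $1$ — and the observation that the Singleton-type argument never uses linearity, only that a systematic code of dimension $k$ has $q^{k}$ codewords and is the bijective image of $\mathbb{F}_{q}^{k}$ under projection onto the information set. The real difficulty of the paper must lie elsewhere, in the regime $q < d$, where the Griesmer bound is strictly stronger than Singleton and this short argument no longer suffices.
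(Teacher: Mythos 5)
Your proposal is correct and follows essentially the same route as the paper: reduce the Griesmer sum under $q \ge d$ to $d + k - 1$ and then invoke the Singleton bound. The only difference is that you also supply a short proof of Singleton for (nonlinear) systematic codes via the weight of the codeword with systematic part $e_1$, whereas the paper simply cites the bound; this is a harmless and arguably welcome elaboration.
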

\begin{proof}
  If $q \ge d$ we have that $\lceil\frac{d}{q^i}\rceil = 1$ for all $i \ge 1$, and so:
 \begin{align*}
  \sum_{i=0}^{k-1}\lceil\frac{d}{q^i}\rceil = d + \lceil\frac{d}{q}\rceil + \dots +  \lceil\frac{d}{q^{k-1}}\rceil = d + k - 1
 \end{align*}
 But we also have, by the Singleton bound, that $n \ge d + k - 1$.  
\end{proof}

\section{The case $d = 1,2,3,4$}
\begin{theorem}\label{thmd12}
  If $d=1,2$, than for all $k \ge 2$, there exists no $q$\_ary systematic code such that $n < \sum_{i=0}^{k-1}\lceil\frac{d}{q^i}\rceil$.
\end{theorem}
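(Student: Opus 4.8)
The plan is to reduce the statement to the Singleton bound, in the same spirit as the proof of Theorem~\ref{thmqd}. The first step is to evaluate the right-hand side $\sum_{i=0}^{k-1}\lceil d/q^i\rceil$ for $d\in\{1,2\}$. Since $q$ is a prime power, $q\ge 2$, so $q^i\ge 2\ge d$ for every $i\ge 1$; hence $\lceil d/q^i\rceil = 1$ for all $i\ge 1$, whereas $\lceil d/q^0\rceil = d$. Therefore
$$
\sum_{i=0}^{k-1}\left\lceil\frac{d}{q^i}\right\rceil \;=\; d + (k-1),
$$
which equals $k$ when $d=1$ and $k+1$ when $d=2$.

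The second step is to show that every systematic $[q,n,k,d]$ code satisfies $n\ge d+k-1$, i.e.\ the Singleton bound. Since the paper's setting allows nonlinear systematic codes, I would give the short direct argument: using $0\in C$, take the systematic word $\bar c = 0$ and a weight-one systematic word $\bar c'$; the two codewords $(\bar c,\tilde c)$ and $(\bar c',\tilde c')$ agree on $k-1$ systematic coordinates, so they must differ in at least $d-1$ of the $n-k$ nonsystematic coordinates, giving $n-k\ge d-1$. Combining this with the first step rules out $n<d+k-1$ and finishes the proof.

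I do not expect a genuine obstacle here. In fact, for $d\in\{1,2\}$ one always has $d\le 2\le q$, so the hypothesis $q\ge d$ of Theorem~\ref{thmqd} is automatically satisfied and the statement is literally a special case of that theorem; the argument above is just that proof written out for these small values of $d$. The only point worth double-checking is that the Singleton-type inequality is being applied to a possibly nonlinear systematic code, which is exactly what the projection/weight-one argument above is designed to justify.
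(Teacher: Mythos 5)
Your proposal is correct and follows the same route as the paper: for $d\in\{1,2\}$ one has $q\ge 2\ge d$, so the claim is exactly the $q\ge d$ case of Theorem~\ref{thmqd}, which rests on the Singleton bound. The only addition is your explicit weight-one-systematic-word justification of the Singleton bound for nonlinear systematic codes, which the paper invokes without proof; that is a worthwhile detail but not a different argument.
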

\begin{proof}
  If $d=1,2$ we have $q \ge d$ an so we are in the hypothesis of Theorem \ref{thmqd}.
\end{proof}
\begin{theorem}\label{thmd34}
  If $d=3,4$, then for all $k \ge 2$, there exists no $q$\_ary systematic code such that $n < \sum_{i=0}^{k-1}\lceil\frac{d}{q^i}\rceil$.
\end{theorem}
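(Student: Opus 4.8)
We must show that every $q$-ary systematic $[q,n,k,d]$ code $C$ with $k\ge 2$ and $d\in\{3,4\}$ satisfies $n\ge\sum_{i=0}^{k-1}\lceil d/q^i\rceil$. The plan is to strip the number of systematic coordinates down to $2$ by repeated shortening, and then to finish the remaining case with the Plotkin bound. Observe first that, since $q\ge 2$ forces $q^2\ge 4\ge d$, we have $\lceil d/q^i\rceil=1$ for every $i\ge 2$, so that
$$
\sum_{i=0}^{k-1}\Big\lceil\frac{d}{q^i}\Big\rceil=(k-2)+d+\Big\lceil\frac{d}{q}\Big\rceil.
$$
Hence it suffices to prove $n\ge (k-2)+d+\lceil d/q\rceil$.

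\emph{Shortening.} Fix one systematic coordinate of $C$ and keep only the codewords that carry the value $0$ in it. By the systematic property there are exactly $q^{k-1}$ such codewords, and they remain pairwise distinct after that coordinate is deleted, since they already differ on the remaining systematic coordinates. We obtain a $q$-ary systematic code with $q^{k-1}$ codewords, length $n-1$, and minimum distance at least $d$ — the distance is preserved because any two retained words agreed on the deleted coordinate — which moreover still contains the zero word (recall $0\in C$). Iterating this step $k-2$ times produces a $q$-ary systematic code $C'$ with $M=q^2$ codewords, length $n'=n-(k-2)$, and minimum distance at least $d$.

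\emph{The base case via Plotkin.} It now suffices to show that any $q$-ary code with $M=q^2$ codewords and minimum distance at least $d$ has length $n'\ge d+\lceil d/q\rceil$. For each coordinate $i$ and symbol $v$, let $N_{i,v}$ be the number of codewords whose $i$-th entry equals $v$; then $\sum_v N_{i,v}=M$, so by convexity $\sum_v N_{i,v}(N_{i,v}-1)\ge M(M/q-1)$, while summing the left-hand side over $i$ counts the ordered pairs of distinct codewords weighted by the number of coordinates on which they agree, a quantity bounded above by $M(M-1)(n'-d)$. With $M=q^2$ this becomes $n'q^2(q-1)\le q^2(q^2-1)(n'-d)$, which simplifies to $n'\ge (q+1)d/q=d+d/q$; since $n'$ and $d$ are integers this forces $n'\ge d+\lceil d/q\rceil$. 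Applying the inequality to $C'$ gives $n-(k-2)\ge d+\lceil d/q\rceil$, i.e.\ $n\ge\sum_{i=0}^{k-1}\lceil d/q^i\rceil$.

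The only genuinely delicate point is the legitimacy of the first reduction: shortening is lossless here \emph{only} because every tail ceiling $\lceil d/q^i\rceil$ with $i\ge 2$ equals $1$, which is precisely what the hypothesis $d\le 4$ buys. For larger $d$ those ceilings exceed $1$, plain shortening becomes wasteful, and one would instead need a residual-code construction analogous to the one in the linear Griesmer proof; producing such a construction for nonlinear systematic codes is the real obstacle that the small-distance assumption lets us sidestep.
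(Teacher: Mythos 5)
Your proof is correct, but it takes a genuinely different route from the paper's. The paper splits into cases: when $q\ge d$ it invokes the Singleton bound (via its Theorem \ref{thmqd}), and for the three remaining pairs $(q,d)\in\{(2,3),(2,4),(3,4)\}$ it argues by hand with a few explicit codewords of low systematic weight (for $q=2$ two such words are forced to share the all-ones nonsystematic part; for $q=3$, $d=4$ a four-word configuration is shown to be impossible). You instead shorten on systematic coordinates down to $k=2$ --- which is exactly where the hypothesis $d\le 4$ enters, since it makes every tail ceiling equal to $1$ so that shortening loses nothing --- and then finish with a $q$-ary Plotkin-type counting bound showing any code with $q^2$ words and distance $d$ has length at least $d+\lceil d/q\rceil$. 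I checked the counting step: the convexity lower bound $\sum_v N_{i,v}(N_{i,v}-1)\ge M(M/q-1)$ and the upper bound $M(M-1)(n'-d)$ on coordinatewise agreements are both right, and with $M=q^2$ the algebra does yield $n'\ge d+d/q$, hence $n'\ge d+\lceil d/q\rceil$ by integrality. Your argument is uniform in $q$ (no case split), and the Plotkin step is strictly more general than needed since it applies to arbitrary, not just systematic, codes with $q^2$ words; the systematic structure is used only to guarantee that each shortening retains exactly $q^{k-1}$ words and preserves distance. The paper's approach, by contrast, requires no auxiliary bound beyond Singleton but does not scale past these small parameters, as its separate $d=5,6$ section shows; your closing remark correctly identifies the residual-code construction as the missing ingredient for larger $d$ in either approach.
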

\begin{proof}
 If $d=3$ and $q \ge 3$ or $d=4$ and $q \ge 4$ then we are in the hypothesis of Theorem \ref{thmqd}.\\

 Otherwise, if $d=3$ and $q=2$ or $d=4$ and $q=2,3$ then we have that $\lceil\frac{d}{q}\rceil = 2$ and $\lceil\frac{d}{q^i}\rceil = 1$ for all $i \ge 2$, and so:
 \begin{align*}
  \sum_{i=0}^{k-1}\lceil\frac{d}{q^i}\rceil = d + \lceil\frac{d}{q}\rceil + \dots + \lceil\frac{d}{q^{k-1}}\rceil = d + k
 \end{align*}
 Suppose by contradiction that $n < d+k$. It is enough to prove the case $n = d+k-1$. Then $n-k = d-1$. Since $0 \in C$, if we consider two different words $c_1 = (\bar{c_1},\tilde{c_1}),c_2 = (\bar{c_2},\tilde{c_2})$ such that $\w(\bar{c_1}) = \w(\bar{c_2}) = 1$, then $\w(\tilde{c_1}) = \w(\tilde{c_2}) = d-1$ and, if $q=2$ then $\tilde{c_1}=\tilde{c_2}=(1,\dots,1)$, and so $\d(c_1,c_2) \le 2$, contradiction.
 There is only one case left, which is the case $q=3$ and $d=4$. In this case, since $k \ge 2$, we have at least $9$ words in $C$. Consider the following four words:
 \begin{align*}
  c_0 = (\bar{c_0},\tilde{c_0}) = & (0 \dots 0 000 , 0 \dots 0) \\
  c_1 = (\bar{c_1},\tilde{c_1}) = & (0 \dots 0 001 , c_{11} c_{12} c_{13}) \\
  c_2 = (\bar{c_2},\tilde{c_2}) = & (0 \dots 0 002 , c_{21} c_{22} c_{23}) \\
  c_3 = (\bar{c_3},\tilde{c_3}) = & (0 \dots 0 010 , c_{31} c_{32} c_{33}) \\
 \end{align*}
 Since the distance between $c_1,c_2,c_3$ from $c_0$ must be greater than $d$, then $\tilde{c_1},\tilde{c_2},\tilde{c_3}$ must have weigth $3$. $c_{11}, c_{12}, c_{13}$ can be any combination of $1$ and $2$, let us suppose $(c_{11}, c_{12}, c_{13}) = (111)$. Then, to have $\d(c_1,c_2) \ge 4$, we must have $(c_{21}, c_{22}, c_{23}) = (222)$. And for the same reason $(c_{31} c_{32} c_{33})$ must differ from $(c_{11} c_{12} c_{13})$ and from
 $(c_{21} c_{22} c_{23})$ in at least two positions at the same time, but this is not possible.
\end{proof}


\section{The case $q=2$ and $d = 5,6$}

\begin{theorem}
  For $k = 2$, there exists no binary systematic code such that $n < \sum_{i=0}^{k-1}\lceil\frac{d}{2^i}\rceil$ for $d=5,6$.
\end{theorem}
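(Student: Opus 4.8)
\emph{Proof sketch.} The plan is to argue in the same spirit as the binary part of the proof of Theorem~\ref{thmd34}. First I would reduce to the extremal length: adjoining a zero coordinate to the nonsystematic part of every codeword yields a systematic code of length one larger, with the same number of codewords, the same systematic positions and the same minimum distance, so it suffices to exclude $n=\bigl(\sum_{i=0}^{k-1}\lceil d/2^i\rceil\bigr)-1$. For $k=2$ and $d\in\{5,6\}$ one has $\lceil d/2\rceil=3$, hence $\sum_{i=0}^{1}\lceil d/2^i\rceil=d+3$; the length to be ruled out is therefore $n=d+2$, and the nonsystematic part then has length $m:=n-k=d$.

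Next I would write $C$ out explicitly. Since $q=2$ and $k=2$, the code has exactly four words, whose systematic parts run over $\{0,1\}^2$; using $0\in C$ we may take
\begin{align*}
  c_0 &= (00,\tilde c_0), \qquad \tilde c_0=(0,\dots,0), \\
  c_1 &= (01,\tilde c_1), \\
  c_2 &= (10,\tilde c_2), \\
  c_3 &= (11,\tilde c_3),
\end{align*}
with $\tilde c_0,\tilde c_1,\tilde c_2,\tilde c_3\in\{0,1\}^m$. Since $c_1$ and $c_2$ are at Hamming distance at least $d$ from $c_0$, we get $\w(\tilde c_1)\ge d-1$ and $\w(\tilde c_2)\ge d-1$; comparing $c_1$ with $c_2$ gives $2+\w(\tilde c_1-\tilde c_2)\ge d$, that is, $\w(\tilde c_1-\tilde c_2)\ge d-2$.

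Then I would close the case by counting zeros. Both $\tilde c_1$ and $\tilde c_2$ lie in $\{0,1\}^m$ with $m=d$ and have weight at least $d-1=m-1$, so each of them has at most one zero coordinate. Writing $\mathbf 1$ for the all-ones word of length $m$, the triangle inequality gives
$$
  \w(\tilde c_1-\tilde c_2)\ \le\ \w(\tilde c_1-\mathbf 1)+\w(\mathbf 1-\tilde c_2)\ \le\ 1+1\ =\ 2,
$$
since $\w(\tilde c_i-\mathbf 1)$ is precisely the number of zero coordinates of $\tilde c_i$. This contradicts $\w(\tilde c_1-\tilde c_2)\ge d-2\ge3$, and settles both $d=5$ and $d=6$ at once.

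I do not expect a genuine obstacle: the only points that need care are carrying out the reduction so that $m$ is forced to be exactly $d$, and splitting each Hamming distance correctly into its systematic and nonsystematic contributions. Observe that $c_3$ is never used, and that the argument is special to $k=2$; for $k\ge3$ many more pairwise distances enter and the crude ``at most one zero'' bound no longer closes the case.
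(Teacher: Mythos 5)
Your proof is correct and takes essentially the same route as the paper's: both reduce to the extremal case $n=d+2$, take the two codewords whose systematic parts have weight one, note that each nonsystematic part (of length $d$) has weight at least $d-1$ and hence at most one zero, and conclude that the two words are at distance at most $4<d$. Your triangle-inequality step is just a more explicit phrasing of the paper's zero-counting argument.
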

\begin{proof}
 If $k=2$ then $ \sum_{i=0}^{1}\lceil\frac{d}{2^i}\rceil = d + \lceil\frac{d}{2}\rceil = d + 3$. Suppose by contradiction that $n < d+3$. It is enough to prove the case $n=d+2$. Consider two different words $c_1 = (\bar{c_1},\tilde{c_1}),c_2 = (\bar{c_2},\tilde{c_2})$ such that $\w(\bar{c_1}) = \w(\bar{c_2}) = 1$,  then $\w(\tilde{c_1}) = \w(\tilde{c_2}) \ge d-1$. Since $n-k = d$, then $\d(\tilde{c_1},\tilde{c_2})\le 2$ and thus $\d(c_1,c_2) \le 4$.
\end{proof}

\begin{theorem}
  For all $k \ge 3$, there exists no binary systematic code such that $n < \sum_{i=0}^{k-1}\lceil\frac{d}{2^i}\rceil$ for $d=5,6$.
\end{theorem}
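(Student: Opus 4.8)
The plan is to reduce everything to a single critical length and then split into $k\ge 4$ and $k=3$. For $k\ge 3$ one computes $\lceil 5/2^i\rceil = 5,3,2,1,1,\dots$ and $\lceil 6/2^i\rceil = 6,3,2,1,1,\dots$, so $\sum_{i=0}^{k-1}\lceil d/2^i\rceil$ equals $k+7$ for $d=5$ and $k+8$ for $d=6$. If a binary systematic code of distance $d$ existed with $n$ strictly below this sum, then appending the appropriate number of all-zero coordinates to every codeword (these new coordinates fall in the nonsystematic part) would yield a binary systematic code with the same $k$, the same $d$, still containing $0$, and with $n$ exactly equal to the sum minus $1$. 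Hence it suffices to exclude $n=k+6$ when $d=5$ and $n=k+7$ when $d=6$; in both cases $m:=n-k=d+1\in\{6,7\}$.

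Next I would exploit the systematic words of weight $1$. For $i=1,\dots,k$ let $a_i\in\mathbb{F}_2^{m}$ be the nonsystematic part of the unique codeword whose systematic part is the $i$-th unit vector. Since $0\in C$, $\w(a_i)\ge d-1=m-2$; and since two distinct unit vectors differ in exactly two positions, $\d(a_i,a_j)\ge d-2=m-3$ for $i\ne j$. Thus $\{a_1,\dots,a_k\}$ is a set of vectors of weight $\ge m-2$ in $\mathbb{F}_2^{m}$ with pairwise distance $\ge m-3$, $m\in\{6,7\}$. Working with complements (each of size $\le 2$) and the crude bound $\lvert A\cap B\rvert\ge\lvert A\rvert+\lvert B\rvert-m$ on their supports, one checks in a few lines: the all-ones vector can occur only alone; at most one $a_i$ has weight $m-1$; the $a_i$ of weight $m-2$ have pairwise disjoint complements, which moreover avoid the unique coordinate missing from an $a_i$ of weight $m-1$ if such an $a_i$ is present; and therefore at most $3$ distinct vectors occur. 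Hence $k\le 3$, which disposes of all $k\ge 4$.

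It then remains to treat $k=3$, where $|C|=8$. Besides $0$ and $a_1,a_2,a_3$, let $b_{ij}$ be the nonsystematic part of the codeword whose systematic part is the sum of the $i$-th and $j$-th unit vectors. The minimum-distance conditions give $\w(b_{ij})\ge d-2$, $\d(b_{ij},a_i)\ge d-1$, $\d(b_{ij},a_j)\ge d-1$ and $\d(b_{ij},a_\ell)\ge d-3$ for the third index $\ell$. By the previous step the triple $(a_1,a_2,a_3)$ is, up to a permutation of the $m$ nonsystematic coordinates, one of only a few explicit configurations: for $d=6$, all three of weight $5$ with complements three disjoint pairs among the $7$ coordinates; for $d=5$, either all three of weight $4$ with complements a perfect matching of the $6$ coordinates, or one of weight $5$ and two of weight $4$ with the forced incidence. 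In each configuration I would pick the suitable pair $\{i,j\}$ and use
$$
\d(b,a_i)+\d(b,a_j)=\d(a_i,a_j)+2\bigl\lvert\{\,t:(a_i)_t=(a_j)_t=1,\ b_t=0\,\}\bigr\rvert ,
$$
which, since $\d(a_i,a_j)$ is small, forces $b_{ij}$ to vanish on every coordinate where $a_i$ and $a_j$ are both $1$ and to meet both lower bounds with equality. The remaining candidates for $b_{ij}$ are then very few, and each of them either has weight below $d-2$ or agrees with $a_\ell$ outside at most $d-4$ coordinates, contradicting $\d(b_{ij},a_\ell)\ge d-3$. Carrying this short check through the two configurations for $d=5$ and the single one for $d=6$ finishes the proof.

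The reduction and the $k\ge 4$ counting are routine; the one genuine obstacle is the $k=3$ case, specifically pinning down the handful of admissible shapes of $(a_1,a_2,a_3)$ when $d=5$ and verifying that no legal $b_{ij}$ survives in any of them.
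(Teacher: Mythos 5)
Your proposal is correct, but it is organized quite differently from the paper's argument, so a comparison is worthwhile. The paper also reduces to $n-k=d+1$ and then works with a fixed set of five codewords whose systematic parts are $0$, $e_1$, $e_2$, $e_1+e_2$, $e_1+e_3$, running a single case analysis (on whether the weight-one systematic words have nonsystematic weight $d$ or $d-1$) that covers every $k\ge 3$ at once. You instead split the work: first a packing lemma on the nonsystematic parts $a_1,\dots,a_k$ of the weight-one systematic words --- vectors of weight $\ge m-2$ in $\mathbb{F}_2^{m}$, $m=d+1\in\{6,7\}$, with pairwise distance $\ge m-3$, whose complements (of size $\le 2$) must be essentially disjoint, so that at most three such vectors exist and $k\le 3$ --- and then an exhaustive treatment of $k=3$ via the weight-two systematic words $b_{ij}$. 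Your $k\ge 4$ step is cleaner and more conceptual than the paper's (it isolates exactly why large $k$ is impossible, and it classifies the admissible triples $(a_1,a_2,a_3)$ as a byproduct, which you then reuse); the price is that you must carry out the $k=3$ verification in full, checking three configurations against the constraints on $b_{ij}$, whereas the paper's five-word argument never needs the classification. I checked your $k=3$ step in each configuration and it closes: the displayed identity forces $b_{ij}$ to vanish on the common support of $a_i$ and $a_j$ and to meet both bounds $\d(b_{ij},a_i)=\d(b_{ij},a_j)=d-1$ with equality, and the surviving candidates all violate $\d(b_{ij},a_\ell)\ge d-3$ (or, in one subcase for $d=5$, already fail the equality requirement). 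One small caveat: as written your formula for $\d(b,a_i)+\d(b,a_j)$ omits the term counting coordinates where $(a_i)_t=(a_j)_t=0$ and $b_t=1$; it is exact only because in every configuration you use, the complements of $a_i$ and $a_j$ are disjoint, so no such coordinate exists. You should state that hypothesis explicitly (or write the identity with the condition $(a_i)_t=(a_j)_t\ne b_t$) when writing this up.
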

\begin{proof}
 If $d=5,6$, we have that $\lceil\frac{d}{2}\rceil = 3$, $\lceil\frac{d}{4}\rceil = 2$ and $\lceil\frac{d}{2^i}\rceil = 1$ for all $i \ge 3$, and so:
 \begin{align*}
  \sum_{i=0}^{k-1}\lceil\frac{d}{2^i}\rceil = d + \lceil\frac{d}{2}\rceil + \lceil\frac{d}{4}\rceil + \dots + \lceil\frac{d}{2^{k-1}}\rceil = d + 5 + k - 3 = d + k + 2
 \end{align*}
 Suppose by contradiction that $n < d+k+2$. It is enough to prove the case $n = d+k+1$, so that $n-k = d+1$. Let us consider the following five words:
 \begin{align*}
  c_0 = (\bar{c_0},\tilde{c_0}) = & (0 \dots 0 000 , 0 \dots 0) \\
  c_1 = (\bar{c_1},\tilde{c_1}) = & (0 \dots 0 001 , c_{11} \dots c_{1,d+1}) \\
  c_2 = (\bar{c_2},\tilde{c_2}) = & (0 \dots 0 010 , c_{21} \dots c_{2,d+1}) \\
  c_3 = (\bar{c_3},\tilde{c_3}) = & (0 \dots 0 011 , c_{31} \dots c_{3,d+1}) \\
  c_5 = (\bar{c_5},\tilde{c_5}) = & (0 \dots 0 101 , c_{51} \dots c_{5,d+1}) 
 \end{align*}
 We want to show that there is no way to assign $0$ or $1$ to the $c_{ij}$ to obtain distance $d$ between these five words. Let us do the following considerations:
 \begin{enumerate}
  \item to have $\d(c_1,c_0)=\w(c_1) \ge d$ and $\d(c_2,c_0)=\w(c_2) \ge d$, it must be that $\w(\tilde{c_1}),\w(\tilde{c_2}) \ge d-1$. Clearly it is not possible that $\w(\tilde{c_1}),\w(\tilde{c_2}) \ge d$, otherwise $\d(c_1,c_2) \le 4$. So, wlog, we have only one of the two following cases:
  \begin{enumerate}
   \item either $\w(\tilde{c_1})=d$ and $\w(\tilde{c_2}) = d-1$,
   \item or $\w(\tilde{c_1})=\w(\tilde{c_2})=d-1$.
  \end{enumerate}    
  \item to have $\w(c_3),\w(c_5) \ge d$, it must be that $\w(\tilde{c_3}),\w(\tilde{c_5}) \ge d-2$.
 \end{enumerate}
 Consider the case (1.a). Since $\d(\bar{c_1},\bar{c_3})=1$ and $\w(\tilde{c_1})=d$, the only way to have distance at least $d$ between $c_0,c_1,c_3$ (modulo the permutation of the colums) is to assign the following values to $c_1,c_3$:
 \begin{align*}
  c_0 = (\bar{c_0},\tilde{c_0}) = & (0 \dots 0 000 , 000000) \\
  c_1 = (\bar{c_1},\tilde{c_1}) = & (0 \dots 0 001 , 011111) \\
  c_3 = (\bar{c_3},\tilde{c_3}) = & (0 \dots 0 011 , 111000) \\
  c_2 = (\bar{c_2},\tilde{c_2}) = & (0 \dots 0 010 , c_{21} \dots c_{2,d+1}) \\
  c_5 = (\bar{c_5},\tilde{c_5}) = & (0 \dots 0 101 , c_{51} \dots c_{5,d+1}) 
 \end{align*}  
 in case $d=5$ and:
  \begin{align*}
  c_0 = (\bar{c_0},\tilde{c_0}) = & (0 \dots 0 000 , 0000000) \\
  c_1 = (\bar{c_1},\tilde{c_1}) = & (0 \dots 0 001 , 0111111) \\
  c_3 = (\bar{c_3},\tilde{c_3}) = & (0 \dots 0 011 , 1111000) \\
  c_2 = (\bar{c_2},\tilde{c_2}) = & (0 \dots 0 010 , c_{21} \dots c_{2,d+1}) \\
  c_5 = (\bar{c_5},\tilde{c_5}) = & (0 \dots 0 101 , c_{51} \dots c_{5,d+1}) 
 \end{align*}  
 in the case $d=6$.\\
 This allows to have $\d(\tilde{c_1},\tilde{c_3})=d-1$, which is the only we can reach in our conditions. \\
 Now consider $c_2$. $\tilde{c_1}$ has only a zero and $d$ ones, and $\tilde{c_2}$ has $d-1$ ones, which are either in the same postitions of the ones in $\tilde{c_1}$ (this case is impossible because otherwise $\d(\tilde{c_1},\tilde{c_2})=1\implies \d(c_1,c_2)=3$) or $c_{21}=1$. Since $\w(c_2)=d$, there remain $d$ bits to be filled in $\tilde{c_2}$, and $d-2$ of this bit must be ones and the other $2$ zeros. Since $c_{31} = c_{21}=1$, to have $\d(\tilde{c_3},\tilde{c_2}) \ge d-1$, at least $d-1$ of the $d$ rightmost bits must differ. Thus we have the following situation in case $d=5$:
 \begin{align*}
  c_0 = (\bar{c_0},\tilde{c_0}) = & (0 \dots 0 000 , 000000) \\
  c_1 = (\bar{c_1},\tilde{c_1}) = & (0 \dots 0 001 , 011111) \\
  c_3 = (\bar{c_3},\tilde{c_3}) = & (0 \dots 0 011 , 111000) \\
  c_2 = (\bar{c_2},\tilde{c_2}) = & (0 \dots 0 010 , 100111) \\
  c_5 = (\bar{c_5},\tilde{c_5}) = & (0 \dots 0 101 , c_{51} \dots c_{56}) 
 \end{align*}  
and in the following situation in case $d=6$:
 \begin{align*}
  c_0 = (\bar{c_0},\tilde{c_0}) = & (0 \dots 0 000 , 0000000) \\
  c_1 = (\bar{c_1},\tilde{c_1}) = & (0 \dots 0 001 , 0111111) \\
  c_3 = (\bar{c_3},\tilde{c_3}) = & (0 \dots 0 011 , 1111000) \\
  c_2 = (\bar{c_2},\tilde{c_2}) = & (0 \dots 0 010 , 1001111) \\
  c_5 = (\bar{c_5},\tilde{c_5}) = & (0 \dots 0 101 , c_{51} \dots c_{57}) 
 \end{align*}   
 Now, $\tilde{c_5}$ must be such that $\w(\tilde{c_5})\ge d-2$ and $\d(\tilde{c_5},\tilde{c_1})\ge d-1$. Thus in $\tilde{c_5}$ there must be at most $d+1-(d-2)=3$ zero components, which is a contradiction because, in the case $d=5$, if:
  \begin{align*}
  c_0 = (\bar{c_0},\tilde{c_0}) = & (0 \dots 0 000 , 000000) \\
  c_1 = (\bar{c_1},\tilde{c_1}) = & (0 \dots 0 001 , 011111) \\
  c_3 = (\bar{c_3},\tilde{c_3}) = & (0 \dots 0 011 , 111000) \\
  c_2 = (\bar{c_2},\tilde{c_2}) = & (0 \dots 0 010 , 100111) \\
  c_5 = (\bar{c_5},\tilde{c_5}) = & (0 \dots 0 101 , 100011) 
 \end{align*}  
 or, in the case $d=6$, if:
  \begin{align*}
  c_0 = (\bar{c_0},\tilde{c_0}) = & (0 \dots 0 000 , 0000000) \\
  c_1 = (\bar{c_1},\tilde{c_1}) = & (0 \dots 0 001 , 0111111) \\
  c_3 = (\bar{c_3},\tilde{c_3}) = & (0 \dots 0 011 , 1111000) \\
  c_2 = (\bar{c_2},\tilde{c_2}) = & (0 \dots 0 010 , 1001111) \\
  c_5 = (\bar{c_5},\tilde{c_5}) = & (0 \dots 0 101 , 1000111) 
 \end{align*}  
 then $\d(c_2,c_5) = 4$.\\
 Let us try now with case (1.b), so that we know $\w(\tilde{c_1})=\w(\tilde{c_2})=d-1$. We also have that $\d(\tilde{c_1},\tilde{c_2})\ge d-2$, and at the same time $\d(\tilde{c_1},\tilde{c_2})$ can only be $0,2,4$, since there are only two zeros components both in $\tilde{c_1}$ and in $\tilde{c_2}$, and $c_1$ and $c_2$ have the same parity. Since $d-2 > 2$, then $\d(\tilde{c_1},\tilde{c_2})$ must be $4$ and the only choice (modulo permutation of the columns) for $\tilde{c_1},\tilde{c_2}$ is:
 \begin{align*}
  c_0 = (\bar{c_0},\tilde{c_0}) = & (0 \dots 0 000 , 000000 | 0) \\
  c_1 = (\bar{c_1},\tilde{c_1}) = & (0 \dots 0 001 , 001111 | 1) \\
  c_2 = (\bar{c_2},\tilde{c_2}) = & (0 \dots 0 010 , 111100 | 1) \\
  c_4 = (\bar{c_4},\tilde{c_4}) = & (0 \dots 0 100 , c_{41} \dots c_{4,d+1}) \\ 
  c_3 = (\bar{c_3},\tilde{c_3}) = & (0 \dots 0 011 , c_{31} \dots c_{3,d+1}) 
 \end{align*} 
 where the rightmost component exists only in the case $d=6$.\\
 Now consider $c_4$, which must be such that $\w(\tilde{c_4}) = d-1$ (it can not be $d$ or $d+1$, otherwise we would be in a similar case to (1.a) ), so that it has two zero components which, using a reasoning similar to that for $\tilde{c_1}$ and $\tilde{c_2}$, to have $\d(\tilde{c_1},\tilde{c_4})=\d(\tilde{c_4},\tilde{c_2})=4$, must be positioned as follows:
  \begin{align*}
  c_0 = (\bar{c_0},\tilde{c_0}) = & (0 \dots 0 000 , 000000 | 0) \\
  c_1 = (\bar{c_1},\tilde{c_1}) = & (0 \dots 0 001 , 001111 | 1) \\
  c_2 = (\bar{c_2},\tilde{c_2}) = & (0 \dots 0 010 , 111100 | 1) \\
  c_4 = (\bar{c_4},\tilde{c_4}) = & (0 \dots 0 100 , 110011 | 1) \\
  c_3 = (\bar{c_3},\tilde{c_3}) = & (0 \dots 0 011 , c_{31} \dots c_{3,d+1}) 
 \end{align*} 
 Now, $c_3$ is such that $\w(\tilde{c_3})\ge d-2$.\\ 
 $\w(\tilde{c_3})=d+1$ or $\w(\tilde{c_3})=d$ is not possible, otherwise we would have $\d(c_3,c_1) \le 4$.\\ 
 $\w(\tilde{c_3})=d-1$ is not possible in the case $d=6$, because, having only two zeros the value $\d(c_3,c_1)$ can be at most 5.\\
 In the case $d=5$, if $\w(\tilde{c_3})=d-1$, to have $\d(c_3,c_1) \ge 5$ the two leftmost component of $\tilde{c_3}$ must be the same as the two leftmost component of $\tilde{c_2}$ and of $\tilde{c_4}$, which are ones, giving either $\d(c_3,c_2)=5$ and $\d(c_3,c_4)=3$, or $\d(c_3,c_2) < 5$, which is a contradiction.\\
 It remains to prove that $\w(\tilde{c_3}) \ne d-2$. In this case, again, to have $\d(c_3,c_1) \ge d$ the two leftmost component of $\tilde{c_3}$ must be the same as the two leftmost component of $\tilde{c_2}$ and of $\tilde{c_4}$, which are ones, obtaining actually $\d(c_3,c_1) = 6$. In the remaining components of $\tilde{c_3}$ there must be three zeros. If these three zeros are in the same positions where the leftmost ones of $\tilde{c_2}$ are, then $\d(c_3,c_2) = d$ and $\d(c_3,c_4) \le 4$. Otherwise $\d(c_3,c_2) < d$.\\
 This completes our proof.
 \end{proof}


\begin{thebibliography}{5}
%
\bibitem{gries}
Griesmer, J.H.:
A bound for error-correcting codes. 
IBM Journal of Res. and Dev., vol. 4, no. 5, pp. 532-542, 1960.

\end{thebibliography}
\end{document}